\newtheorem{lem}{Lemma}
\newtheorem{pro}[lem]{Proposition}
\newtheorem{theorem}[lem]{Theorem}
\theoremstyle{definition}
\newtheorem{definition}[lem]{Definition}
\newtheorem{example}[lem]{Example}
\newtheorem{lemma}[lem]{Lemma}
\newtheorem{remark}[lem]{Remark}
\begin{document}

\title{Structure and Construction of Two-Dimensional Minimal Linear Codes over  the rings $\mathbb{Z}_{p^n}$ with Applications to Secret Sharing}

\author{Biplab Chatterjee$^1$, Sihem Mesnager$^2$, Ratnesh Kumar Mishra$^1$, Makhan Maji$^3$,
 and Kalyan Hansda$^4$

\IEEEcompsocitemizethanks{\IEEEcompsocthanksitem
$~1$Department of Mathematics, NIT Jamshedpur, Jamshedpur, 831014, Jharkhand, India

$~^2$Department of Mathematics, University of Paris VIII, 93526 Saint-Denis, LAGA, UMR 7539, CNRS, 93430 Villetaneuse,
and Telecom Paris, 91120 Palaiseau,
Paris, France

$~^3$Indian Institute of Technology Madras, Chennai, Tamil Nadu 600036, India

$~^4$Department of Mathematics, Visva-Bharati,
Santiniketan, Bolpur-731235, West Bengal, India;

E-mail: 2022rsma001@nitjsr.ac.in, smesnager@univ-paris8.fr, ratnesh.math@nitjsr.ac.in, makhan2maths@gmail.com, kalyanh4@gmail.com
}
\thanks{}}
\markboth{}{}


\maketitle

\begin{abstract}
Minimal linear codes play an important role in coding theory and cryptography, particularly in the construction of secret sharing schemes. In this paper, we investigate the structure and construction of two-dimensional minimal linear codes over the finite rings $\mathbb{Z}_{p^n}$. 

We provide an explicit construction of a family of two-dimensional linear codes generated by a structured $2\times m$ matrix over $\mathbb{Z}_{p^n}$ and prove that these codes are minimal whenever the generator matrix contains all $p^n+p^{n-1}$ essential types of column vectors. We further show that this condition is necessary: removing any of these column types destroys the resulting code's minimality. As a consequence, we establish a lower bound on the length of two-dimensional minimal linear codes over $\mathbb{Z}_{p^n}$.

Several examples are presented to illustrate the construction and to verify the theoretical results. We also demonstrate that the proposed construction cannot be extended in a straightforward manner to rings of the form $\mathbb{Z}_{p^n q^l}$. 

Finally, we apply our results to the design of secret sharing schemes derived from minimal linear codes over $\mathbb{Z}_{p^n}$ and analyze the corresponding access structures. Our study highlights structural differences between minimal codes defined over finite rings and those over finite fields, revealing new perspectives for coding-theoretic constructions in cryptographic applications.

\end{abstract}

\begin{IEEEkeywords}
Linear codes, Minimal linear codes, Commutative rings, Zero divisors, Secret sharing schemes.
\end{IEEEkeywords}


\IEEEdisplaynontitleabstractindextext

\IEEEpeerreviewmaketitle

\section{Introduction}
The study of \emph{minimal linear codes} has attracted increasing attention in recent years due to their deep connections with several areas of modern information theory and discrete mathematics, including \emph{secret-sharing schemes}, \emph{two-party computation protocols}, and more recently \emph{additive combinatorics}. Minimal codes provide a natural bridge between combinatorial structures and information-theoretic security, revealing elegant algebraic regularities that can be exploited both from theoretical and practical perspectives. 

For instance, minimal codewords determine the access structures in secret sharing schemes, a fundamental observation due to Massey~\cite{Massey93}. Furthermore, minimal codes have been used to study additive-combinatorial problems such as the $j$-wise Davenport constant~\cite{PlagneSchmid2011}. They are also closely related to decoding problems and geometric properties of linear codes, which are central topics in coding theory~\cite{Agrell96}. This renewed interest has been stimulated by several recent works that revisit and extend the classical foundations of the area~\cite{AlfaranoBorelloNeri22, BishnoiEtAl24}.

Within the framework of \emph{linear secret-sharing schemes}, minimal vectors of a linear code determine the \emph{minimal access structures}, namely the smallest sets of participants that are capable of reconstructing a secret. Foundational conditions guaranteeing minimality were established by Ashikhmin and Barg~\cite{AshikhminBarg98}, who proved that a linear code $C$ over a finite field $\mathbb{F}_q$ is minimal whenever the ratio of its minimum to maximum nonzero weights satisfies
\[
\frac{w_{\min}}{w_{\max}}>\frac{q-1}{q}.
\]
Since then, several further characterizations have been obtained, including necessary and sufficient conditions for binary codes~\cite{DingHengZhou2018} and various constructions of minimal linear codes with few weights~\cite{MesnagerSinakYayla19}.

A particularly elegant characterization was later obtained by Lu, Wu, and Cao~\cite{LuWuCao21}, who established necessary and sufficient conditions for the minimality of a linear code. Their results provide a systematic approach to both constructing minimal codes and verifying whether a given code is minimal. In particular, they proved that any $2$-dimensional minimal linear code over a finite field must have length at least $q+1$.

In parallel with these developments, the study of \emph{coding theory over finite rings}, initiated in the 1990s, has revealed a remarkably rich algebraic structure. A milestone in this direction was the discovery that certain optimal nonlinear binary codes, such as the Kerdock~\cite{Kerdock72} and Preparata~\cite{Preparata68} codes, can be interpreted as \emph{linear codes over the ring} $\mathbb{Z}_4$ endowed with the Lee metric~\cite{Hammons94}. Since then, \emph{ring-linear codes} have attracted considerable attention, both for their theoretical interest and for their practical relevance~\cite{GaoShiWuFu16, MajiMesnagerSarkarHansda22}.

More recently, the emergence of \emph{post-quantum cryptography} has further stimulated research on algebraic and code-based constructions that remain secure against quantum attacks~\cite{BerlekampMcElieceTilborg78, McEliece78, Bernstein09, Aragon2017, Avanzi2017, Melchor18, Moody20,Vardy97}. Among the most promising candidates in this context are \emph{code-based cryptosystems}. A major challenge for such systems, however, is the large size of their public keys. This has motivated the investigation of alternative \emph{ambient spaces}, in particular finite rings, together with alternative metrics such as the Lee metric, which may lead to more compact and efficient implementations~\cite{HorlemannWeger21, Weger22}

Extending the notion of minimal linear codes from finite fields to finite rings, therefore, appears as both a \emph{natural algebraic generalisation} and a \emph{pragmatic response} to modern cryptographic requirements. It is also related to ongoing developments in \emph{quantum error correction (QEC)}, where codes over rings provide flexible algebraic models for capturing multilevel quantum noise. When working over finite rings, however, additional structural subtleties arise, notably due to the presence of \emph{zero-divisors} and nontrivial ideals, which significantly influence the minimality properties of the associated codes.

Recently, Makhan Maji \emph{et al.}~\cite{MajiMesnagerSarkarHansda22} characterized and constructed \emph{one-dimensional minimal linear codes} over finite rings, thereby laying an important foundation for the study of higher-dimensional constructions.

Minimal linear codes, therefore, occupy a central position at the intersection of algebra, combinatorics, and information security. While the theory over finite fields is now well developed, its extension to finite rings, particularly to rings such as \(\mathbb{Z}_{p^n}\), remains a fertile and largely unexplored research direction. The richer algebraic structure of these rings opens new possibilities for constructing families of linear codes with desirable combinatorial and cryptographic properties.

In this paper, we investigate the structure of \emph{two-dimensional minimal linear codes over} \(\mathbb{Z}_{p^n}\). We show that the generator matrix of such a code must contain precisely \(p^n+p^{n-1}\) column vectors, a configuration that is both \emph{necessary and sufficient} for minimality. Furthermore, we prove that removing any one of these vectors immediately destroys the minimality property.

Based on this structural characterization, we propose a \emph{constructive method} for designing minimal linear codes over \(\mathbb{Z}_{p^n}\). As an application, we use these constructions to derive \emph{secret-sharing schemes} over \(\mathbb{Z}_{p^n}\). To illustrate the method concretely, we provide an explicit example over \(\mathbb{Z}_4\). Our results, therefore, extend the classical theory of minimal linear codes beyond the finite-field setting and reveal new connections between \emph{coding theory}, \emph{ring theory}, and \emph{cryptographic design}.

The main contribution of this paper is an explicit structural characterization of
two-dimensional minimal linear codes over the ring $\mathbb{Z}_{p^n}$.
The following theorem summarizes the core result of the paper.

\begin{theorem}[Main Theorem]
\label{MainTheorem}
Let $p$ be a prime and $n\ge1$. Let $\mathbb{Z}_{p^n}$ denote the finite chain ring of integers modulo $p^n$. 

There exists a class of two-dimensional minimal linear codes over $\mathbb{Z}_{p^n}$ generated by a $2\times m$ matrix
\[
G=\begin{pmatrix} v_1 \\ v_2 \end{pmatrix},
\]
whose columns consist of all vectors of the following types
\[
(1,0)^T,\quad (0,1)^T,\quad (1,u)^T,\quad (1,d)^T,\quad (d,1)^T,
\]
where $u$ ranges over the units of $\mathbb{Z}_{p^n}$ and $d$ ranges over the non-zero zero divisors of $\mathbb{Z}_{p^n}$.

The resulting linear code is minimal whenever the generator matrix contains all 
\[
p^n+p^{n-1}
\]
such column types, and therefore its length satisfies
\[
m\ge p^n+p^{n-1}.
\]

Moreover, this condition is necessary: if any one of these column types is removed from the generator matrix, then the resulting code is no longer minimal.
\end{theorem}

The remainder of the paper is devoted to the proof and consequences of this theorem.
We first develop the structural lemmas describing the possible forms of codewords,
then establish the minimality of each type of vector.
Finally, we show that the presence of all $p^n+p^{n-1}$ column types is necessary
for minimality and illustrate the construction with explicit examples.

The remainder of the paper is organized as follows.  
Section~\ref{Preliminaries} recalls the necessary algebraic background and reviews structural properties of units and zero-divisors in \(\mathbb{Z}_{p^n}\).  
Section~\ref{Mainresult} presents the main theoretical results and provides the explicit construction of the generator matrix for two-dimensional minimal linear codes, together with five supporting lemmas.  
Section~\ref{utilization of the generator matrix} explains how the generator matrix can be extended and analyzes the effect of removing specific column vectors.  
Subsection~\ref{Comment on the ring} discusses why the proposed construction technique does not extend to the ring \(\mathbb{Z}_{p^n q^l}\).  
Section~\ref{Applications} presents applications to secret sharing, including an explicit example over \(\mathbb{Z}_4\).  
Finally, Section~\ref{conclusion} concludes the paper and outlines several directions for future research.

\section{Preliminaries on Minimal Codes and Finite Rings}\label{Preliminaries}

In this section, we recall several fundamental notions from coding theory and ring theory that will be used throughout the paper. In particular, we review the concepts of linear codes, generator matrices, Hamming distance, weight, support, covering relation, and minimal codewords. These notions play a crucial role in the structural results established later in Section~\ref{Mainresult}. 

Since our framework involves modules over finite rings rather than vector spaces over finite fields, it is important to carefully adapt the classical definitions from coding theory to this more general algebraic setting. Standard references for these concepts include \cite{Hill86} and \cite{AshikhminBarg98}. We also follow the approach of \cite{MajiMesnagerSarkarHansda22} for extending minimality notions from finite fields to finite rings.

\begin{definition}\cite{Hill86}
An $[n,k,d]$ linear code $C$ over the finite field $\mathbb{F}_q$ of order $q$ is a $k$-dimensional linear subspace of $\mathbb{F}_q^n$ with minimum (Hamming) distance $d$. The elements of $\mathbb{F}_q$ are called the \emph{alphabet} of the code.
\end{definition}

\begin{definition}\cite{Hill86}
A matrix $G$ is called a \emph{generator matrix} of a linear code $C$ if the rows of $G$ are linearly independent and generate the linear code $C$.
\end{definition}

In \cite{MajiMesnagerSarkarHansda22}, the notion of a linear code is extended from vector spaces over finite fields to modules over finite rings. In this setting, the alphabet is no longer the vector space $\mathbb{F}_q^n$ but the $R$-module
\[
R^n=\{(x_1,x_2,\ldots,x_n)\mid x_i\in R,\;1\le i\le n\}.
\]
A code is called \emph{linear} if it forms an $R$-submodule of the $R$-module $R^n$.

For a vector $v=(v_1,v_2,\ldots,v_n)\in R^n$, the \emph{support} of $v$, denoted by $Supp(v)$, is defined as
\[
Supp(v)=\{\, i\in\mathbb{N}\mid 1\le i\le n,\; v_i\neq 0 \,\}.
\]

\begin{definition}\cite{AshikhminBarg98}
Let $u,v\in R^n$. The vector $u$ is said to \emph{cover} the vector $v$ if
\[
Supp(v)\subseteq Supp(u).
\]
In this case, we write $v\preceq u$. If the inclusion is strict, that is, $Supp(v)\subsetneq Supp(u)$, then we write $v\prec u$.
\end{definition}

Following \cite{MajiMesnagerSarkarHansda22}, a codeword $u$ of a linear code $C$ over a ring $R$ is said to be \emph{minimal} if $u$ covers only the codewords of the form $au$ with $a\in R$, and no other codeword of $C$. A linear code $C$ over $R$ is called \emph{minimal} if every nonzero codeword of $C$ is minimal.

\begin{definition}\cite{Hill86}
Let $x,y\in R^n$ be two codewords. The \emph{Hamming distance} between $x$ and $y$ is the number of coordinates in which they differ, and is denoted by $d(x,y)$.
\end{definition}

\begin{definition}\cite{Hill86}
The \emph{Hamming weight} of a vector $x\in R^n$ is the number of nonzero coordinates of $x$, and is denoted by $w(x)$. Equivalently,
\[
w(x)=d(x,0).
\]
\end{definition}

To facilitate the presentation,  we also recall several basic facts concerning the structure of the ring $\mathbb{Z}_{p^n}$, which will be used repeatedly in the sequel.

Throughout this paper, unless stated otherwise, a zero divisor of $\mathbb{Z}_{p^n}$ will be denoted by $d$, and a unit of $\mathbb{Z}_{p^n}$ by $u$. It is well known that every zero divisor $d$ in $\mathbb{Z}_{p^n}$ can be written in the form
\[
d=p^k u,
\]
where $u$ is a unit in $\mathbb{Z}_{p^n}$ and $1\le k<n$.

We denote by
\[
D(\mathbb{Z}_{p^n})=\{d_1,d_2,\ldots,d_{p^{n-1}-1}\}
\]
the set of all nonzero zero divisors of $\mathbb{Z}_{p^n}$. The set
\[
D(\mathbb{Z}_{p^n})\cup\{\bar{0}\}
\]
forms an additive subgroup of $\mathbb{Z}_{p^n}$.

Similarly, the set of all units of $\mathbb{Z}_{p^n}$ is denoted by
\[
U(\mathbb{Z}_{p^n})=\{u_1,u_2,\ldots,u_{p^n-p^{n-1}}\}.
\]
Recall that an element of $\mathbb{Z}_{p^n}$ is a unit if and only if it is relatively prime to $p$.

The following elementary property follows from the fact that the ideal $\langle p\rangle$ is the unique maximal ideal of $\mathbb{Z}_{p^n}$ and contains all zero divisors of the ring.

\begin{pro}\label{theoremx}
In the ring $\mathbb{Z}_{p^n}$, the sum and the product of two zero divisors are either zero or another zero divisor. Moreover, the sum of a zero divisor and a unit is always a unit.
\end{pro}

\begin{remark}\label{remarka}
Let $d_i \in D(\mathbb{Z}_{p^n})$. Consider the collection
\[
\mathcal{A}_{d_i}=\{\, d_i+d_j \mid d_j \in D(\mathbb{Z}_{p^n}) \,\}.
\]
Then the set $\mathcal{A}_{d_i}$ contains all zero divisors of $\mathbb{Z}_{p^n}$ except $d_i$, and all the elements of $\mathcal{A}_{d_i}$ are distinct.
\end{remark}

The following lemma describes several elementary but important algebraic properties of units and zero divisors in the ring $\mathbb{Z}_{p^n}$, which will be repeatedly used in the sequel.

\begin{lemma}\label{lemma1}

\textbf{(i)} For every $u \in U(\mathbb{Z}_{p^n})$, there exists a unique $u' \in U(\mathbb{Z}_{p^n})$ such that
\[
1+uu'=0,
\]
while
\[
1+uu'' \neq 0
\]
for all $u''\in U(\mathbb{Z}_{p^n})$ with $u''\neq u'$. Moreover,
\[
d+u\neq 0 \quad \text{and} \quad 1+ud\neq 0
\]
for every $d \in D(\mathbb{Z}_{p^n})$.

\textbf{(ii)} For every $d \in D(\mathbb{Z}_{p^n})$, there exists a unique $d' \in D(\mathbb{Z}_{p^n})$ such that
\[
d+d'=0,
\]
while
\[
d+d''\neq 0
\]
for all $d''\in D(\mathbb{Z}_{p^n})$ with $d''\neq d'$. Furthermore,
\[
d+u\neq 0, \qquad 1+du\neq 0, \qquad 1+dd' \neq 0
\]
for every $u\in U(\mathbb{Z}_{p^n})$ and every $d'\in D(\mathbb{Z}_{p^n})$.
\end{lemma}

\begin{proof}

\textbf{(i)} Let $u\in U(\mathbb{Z}_{p^n})$. Since $u$ is a unit, its inverse $u^{-1}$ also belongs to $U(\mathbb{Z}_{p^n})$. Hence $-u^{-1}\in U(\mathbb{Z}_{p^n})$ and
\[
1+u(-u^{-1})=0.
\]
Therefore, we may take $u'=-u^{-1}$.

To prove uniqueness, suppose that there exists another $u''\in U(\mathbb{Z}_{p^n})$ such that
\[
1+uu'=0 \quad \text{and} \quad 1+uu''=0.
\]
Then $uu'=uu''$, which implies $u'=u''$ since $u$ is a unit. Hence, the element $u'$ is unique, and consequently
\[
1+uu''\neq 0
\]
for every $u''\in U(\mathbb{Z}_{p^n})$ with $u''\neq u'$.

Now let $d\in D(\mathbb{Z}_{p^n})$. By Proposition~\ref{theoremx}, the sum of a zero divisor and a unit is again a unit, and the product of a zero divisor with a unit is a zero divisor or zero. Consequently,
\[
d+u \quad \text{and} \quad 1+ud
\]
are units of $\mathbb{Z}_{p^n}$. In particular,
\[
d+u\neq 0 \quad \text{and} \quad 1+ud\neq 0
\]
for all $d\in D(\mathbb{Z}_{p^n})$ and $u\in U(\mathbb{Z}_{p^n})$.

\textbf{(ii)} Let $d\in D(\mathbb{Z}_{p^n})$. Its additive inverse $-d$ also belongs to $D(\mathbb{Z}_{p^n})$, and we may therefore take $d'=-d$. Clearly,
\[
d+d'=0.
\]

To prove uniqueness, suppose that there exists $d''\in D(\mathbb{Z}_{p^n})$ with $d''\neq d'$ such that $d+d''=0$. Then $d''=-d=d'$, which yields a contradiction. Hence $d'$ is unique and
\[
d+d''\neq 0
\]
for all $d''\in D(\mathbb{Z}_{p^n})$ with $d''\neq d'$.

Furthermore, by Proposition~\ref{theoremx}, the elements $du$ and $dd'$ are zero divisors (or zero). Therefore the elements
\[
1+du, \qquad d+u, \qquad 1+dd'
\]
are units of $\mathbb{Z}_{p^n}$. In particular,
\[
1+du\neq 0, \qquad d+u\neq 0, \qquad 1+dd'\neq 0
\]
for all $d,d'\in D(\mathbb{Z}_{p^n})$ and all $u\in U(\mathbb{Z}_{p^n})$.

This completes the proof.

\end{proof}

The previous lemma describes the interaction between units and zero divisors in $\mathbb{Z}_{p^n}$. The following remark highlights another useful structural property involving sums of units, which will play a role in the subsequent arguments.

\begin{remark}\label{theoreme}
Let $u_i \in U(\mathbb{Z}_{p^n})$. Consider the collection
\[
\mathcal{U}_{u_i}=\{\,1+u_i u_j \mid u_j \in U(\mathbb{Z}_{p^n})\,\}.
\]
Then the set $\mathcal{U}_{u_i}$ contains all the zero divisors of $\mathbb{Z}_{p^n}$, and each of them occurs exactly once.
\end{remark}

\begin{proof}
Let $p^k u_r$ be an arbitrary zero divisor of $\mathbb{Z}_{p^n}$, where $1\le k<n$ and $u_r\in U(\mathbb{Z}_{p^n})$. We show that this element belongs to the set $\mathcal{U}_{u_i}$.

Since $u_i$ is a unit, $u_i^{-1}$ also belongs to $U(\mathbb{Z}_{p^n})$. By Proposition~\ref{theoremx}, the element
\[
p^k u_r u_i^{-1}-u_i^{-1}
\]
is a unit of $\mathbb{Z}_{p^n}$. Hence the element
\[
1+\left(p^k u_r u_i^{-1}-u_i^{-1}\right)u_i
\]
belongs to the set $\mathcal{U}_{u_i}$.

A direct computation shows that
\[
1+\left(p^k u_r u_i^{-1}-u_i^{-1}\right)u_i
= p^k u_r.
\]
Therefore $p^k u_r \in \mathcal{U}_{u_i}$. Since $p^k u_r$ was an arbitrary zero divisor, it follows that the set $\mathcal{U}_{u_i}$ contains all zero divisors of $\mathbb{Z}_{p^n}$.

To prove uniqueness, suppose that
\[
1+u_i u_j = 1+u_i u_{j'}.
\]
Then
\[
u_i u_j = u_i u_{j'}.
\]
Since $u_i$ is a unit, this implies
\[
u_j = u_{j'}.
\]
Hence, each zero divisor appears exactly once in the collection $\mathcal{U}_{u_i}$.

This completes the proof.
\end{proof}

The following lemma provides a convenient classification of ordered pairs in $\mathbb{Z}_{p^n}^2$. 
This decomposition will be useful later when analyzing the possible column vectors of generator matrices.

\begin{lemma}\label{lemmab}
Every ordered pair $(c_1,c_2)$ with $c_1,c_2\in\mathbb{Z}_{p^n}$ belongs to one of the following five types:
\[
\text{(i)}\; k_1(1,0), \qquad
\text{(ii)}\; k_2(0,1), \qquad
\text{(iii)}\; k_3(1,u), \qquad
\text{(iv)}\; k_4(1,d), \qquad
\text{(v)}\; k_5(d,1),
\]
where $k_i\in\mathbb{Z}_{p^n}$, $u\in U(\mathbb{Z}_{p^n})$ is a unit, and $d\in D(\mathbb{Z}_{p^n})$ is a zero divisor.
\end{lemma}

\begin{proof}

Let $(c_1,c_2)\in\mathbb{Z}_{p^n}^2$. We distinguish several cases.

\medskip
\noindent
\textbf{Case 1:} $c_1=0$.  

Then
\[
(c_1,c_2)=(0,c_2)=c_2(0,1),
\]
and therefore the pair is of type (ii).

\medskip
\noindent
\textbf{Case 2:} $c_2=0$.  

Then
\[
(c_1,c_2)=(c_1,0)=c_1(1,0),
\]
so the pair is of type (i).

\medskip
\noindent
We now assume that $c_1\neq 0$ and $c_2\neq 0$.

\medskip
\noindent
\textbf{Case 3:} $c_1$ and $c_2$ are both units.

Then
\[
(c_1,c_2)=c_1(1,c_2c_1^{-1}),
\]
and since $c_2c_1^{-1}$ is also a unit, the pair is of type (iii).

\medskip
\noindent
\textbf{Case 4:} $c_1$ is a unit and $c_2$ is a zero divisor.

Then
\[
(c_1,c_2)=c_1(1,c_2c_1^{-1}),
\]
and since $c_2c_1^{-1}$ is a zero divisor, the pair is of type (iv).

\medskip
\noindent
\textbf{Case 5:} $c_1$ is a zero divisor and $c_2$ is a unit.

Then
\[
(c_1,c_2)=c_2(c_1c_2^{-1},1),
\]
and since $c_1c_2^{-1}$ is a zero divisor, the pair is of type (v).

\medskip
\noindent
\textbf{Case 6:} $c_1$ and $c_2$ are both zero divisors.

Write
\[
c_1=p^{k_1}u_{r_1}, \qquad c_2=p^{k_2}u_{r_2},
\]
where $u_{r_1},u_{r_2}\in U(\mathbb{Z}_{p^n})$.

\begin{itemize}

\item[(a)] If $k_1<k_2$, then
\[
(c_1,c_2)=p^{k_1}u_{r_1}\bigl(1,p^{k_2-k_1}u_{r_2}u_{r_1}^{-1}\bigr),
\]
and since $p^{k_2-k_1}u_{r_2}u_{r_1}^{-1}$ is a zero divisor, the pair is of type (iv).

\item[(b)] If $k_2<k_1$, then
\[
(c_1,c_2)=p^{k_2}u_{r_2}\bigl(p^{k_1-k_2}u_{r_1}u_{r_2}^{-1},1\bigr),
\]
and since $p^{k_1-k_2}u_{r_1}u_{r_2}^{-1}$ is a zero divisor, the pair is of type (v).

\item[(c)] If $k_1=k_2$, then
\[
(c_1,c_2)=p^{k_1}u_{r_1}\bigl(1,u_{r_2}u_{r_1}^{-1}\bigr),
\]
and since $u_{r_2}u_{r_1}^{-1}$ is a unit, the pair is of type (iii).

\end{itemize}

\medskip

Thus, in all possible cases, the ordered pair $(c_1,c_2)$ belongs to one of the five types listed above. This completes the proof.

\end{proof}

\section{Explicit Construction of Two-Dimensional Minimal Linear Codes over $\mathbb{Z}_{p^n}$}
\label{Mainresult}

In this section, we present a construction of a class of $2$-dimensional minimal linear codes over the ring $\mathbb{Z}_{p^n}$ using a suitable generator matrix. 
The construction relies on the structural properties of units and zero divisors described in the previous section.

Let 
\[
u_1,u_2,\ldots,u_{p^n-p^{n-1}}
\]
be the complete list of units of $\mathbb{Z}_{p^n}$, and let
\[
d_1,d_2,\ldots,d_{p^{n-1}-1}
\]
be the complete list of nonzero zero divisors of $\mathbb{Z}_{p^n}$.

Let $m,n$ be natural numbers such that
\[
m\ge p^n+p^{n-1}.
\]
The reason for imposing this condition on $m$ will be explained later in Remark~\ref{remark7}.

\medskip

We now introduce the generator matrix used in our construction. Consider the matrix
\[
G=
\begin{pmatrix}
I_2 & U & D^* & D & A
\end{pmatrix},
\]
where

\[
I_2=
\begin{pmatrix}
1 & 0\\
0 & 1
\end{pmatrix},
\]

\[
U=
\begin{pmatrix}
1 & 1 & \dots & 1\\
u_1 & u_2 & \dots & u_{p^n-p^{n-1}}
\end{pmatrix},
\]

\[
D^*=
\begin{pmatrix}
d_1 & d_2 & \dots & d_{p^{n-1}-1}\\
1 & 1 & \dots & 1
\end{pmatrix},
\]

\[
D=
\begin{pmatrix}
1 & 1 & \dots & 1\\
d_1 & d_2 & \dots & d_{p^{n-1}-1}
\end{pmatrix},
\]

and

\[
A=
\begin{pmatrix}
a_1 & a_2 & \dots & a_k\\
b_1 & b_2 & \dots & b_k
\end{pmatrix},
\]

with $a_i,b_i\in\mathbb{Z}_{p^n}$.

The matrix $G$ generates a $2$-dimensional linear code $M$ over $\mathbb{Z}_{p^n}$.

\medskip

To describe the rows of the generator matrix explicitly, define the vectors

\begin{align*}
v_1&=(1,0,\underbrace{1,1,\ldots,1}_{p^n-p^{n-1}},d_1,d_2,\ldots,d_{p^{n-1}-1},
\underbrace{1,1,\ldots,1}_{p^{n-1}-1},a_1,\ldots,a_k),
\\
v_2&=(0,1,u_1,u_2,\ldots,u_{p^n-p^{n-1}},
\underbrace{1,1,\ldots,1}_{p^{n-1}-1},
d_1,d_2,\ldots,d_{p^{n-1}-1},b_1,\ldots,b_k).
\end{align*}

Hence, the length of the code is
\[
m=p^n+p^{n-1}+k,
\]
and the generator matrix can be written as

\[
G=
\begin{pmatrix}
v_1\\
v_2
\end{pmatrix}_{2\times m}.
\]

\medskip

Consequently, every element of the code $M$ can be written in the form
\[
(c_1,c_2)
\begin{pmatrix}
v_1\\
v_2
\end{pmatrix},
\qquad
c_1,c_2\in\mathbb{Z}_{p^n}.
\]

A direct computation gives

\[
(c_1,c_2)
\begin{pmatrix}
v_1\\
v_2
\end{pmatrix}
=
(c_1,c_2,
c_1+c_2u_1,\ldots,c_1+c_2u_{p^n-p^{n-1}},
c_1d_1+c_2,\ldots,c_1d_{p^{n-1}-1}+c_2,
c_1+c_2d_1,\ldots,c_1+c_2d_{p^{n-1}-1},
c_1a_1+c_2b_1,\ldots,c_1a_k+c_2b_k).
\]

\medskip

Using Lemma~\ref{lemmab}, we can classify the vectors of $M$ into five distinct types. 
Indeed, each codeword of $M$ can be expressed as one of the following forms:

\[
\text{(i)}\quad
k_1
\begin{pmatrix}
1 & 0
\end{pmatrix}
\begin{pmatrix}
v_1\\
v_2
\end{pmatrix},
\]

\[
\text{(ii)}\quad
k_2
\begin{pmatrix}
0 & 1
\end{pmatrix}
\begin{pmatrix}
v_1\\
v_2
\end{pmatrix},
\]

\[
\text{(iii)}\quad
k_3
\begin{pmatrix}
1 & u_i
\end{pmatrix}
\begin{pmatrix}
v_1\\
v_2
\end{pmatrix},
\]

\[
\text{(iv)}\quad
k_4
\begin{pmatrix}
1 & d_j
\end{pmatrix}
\begin{pmatrix}
v_1\\
v_2
\end{pmatrix},
\]

\[
\text{(v)}\quad
k_5
\begin{pmatrix}
d_j & 1
\end{pmatrix}
\begin{pmatrix}
v_1\\
v_2
\end{pmatrix},
\]

where $k_i\in\mathbb{Z}_{p^n}$, $1\le j\le p^{n-1}-1$, and $1\le i\le p^n-p^{n-1}$.

\medskip

Unless stated otherwise, we keep the above notation throughout the following lemmas and theorems.

\begin{lemma} \label{lemma4}
The vectors of the type
$k_1\begin{pmatrix}
      1 & 0
    \end{pmatrix}
    \begin{pmatrix}
      v_1 \\
      v_2
    \end{pmatrix}$
are minimal codewords in $M$.
\end{lemma}

\begin{proof}

Recall from the previous discussion that every vector of $M$ belongs to one of the five types described earlier. 
To prove the lemma, we compare vectors of type
\[
k_1\begin{pmatrix}1&0\end{pmatrix}
\begin{pmatrix}v_1\\v_2\end{pmatrix}
= k_1 v_1
\]
with vectors of the remaining types. 
We therefore proceed by considering five cases.

\medskip
\noindent
\textbf{Case 1. Comparison with vectors of the same type.}

Vectors of the form $k_1v_1$ and $k_1^*v_1$ belong to the module $\langle v_1\rangle$. 
According to the result of Maji, Mesnager, Sarkar, and Hansda~\cite{MajiMesnagerSarkarHansda22}, if a vector contains generators from all proper non-trivial ideals of $\mathbb{Z}_{p^n}$, then the module generated by this vector is minimal. 

In our construction, the vector $v_1$ contains all zero divisors of $\mathbb{Z}_{p^n}$ as components, and the condition $m\ge n-1$ is satisfied. 
Hence the module $\langle v_1\rangle$ is minimal. 
Consequently, vectors of the form $k_1v_1$ are minimal with respect to other vectors of the form $k_1^*v_1$.

\medskip
\noindent
\textbf{Case 2. Comparison with vectors of type $(0,1)$.}

Observe that
\[
\begin{pmatrix}1&0\end{pmatrix}
\begin{pmatrix}v_1\\v_2\end{pmatrix}=v_1,
\qquad
\begin{pmatrix}0&1\end{pmatrix}
\begin{pmatrix}v_1\\v_2\end{pmatrix}=v_2 .
\]

The second component of $v_2$ is $1$, which is a unit, whereas the second component of $v_1$ is $0$. 
Therefore the second component of $k_2v_2$ is nonzero for every $k_2\neq 0$. 
Hence
\[
2\in Supp(k_2v_2) \quad\text{but}\quad 2\notin Supp(k_1v_1).
\]

Thus
\[
Supp(k_2v_2)\nsubseteq Supp(k_1v_1),
\]
which proves that $k_1v_1$ is minimal compared with vectors of type $(0,1)$.

\medskip
\noindent
\textbf{Case 3. Comparison with vectors of type $(1,u_i)$.}

Consider
\[
\begin{pmatrix}1&u_i\end{pmatrix}
\begin{pmatrix}v_1\\v_2\end{pmatrix}.
\]

The second component of this vector is $u_i$, which is a unit. 
Consequently, the second component of
\[
k_3\begin{pmatrix}1&u_i\end{pmatrix}
\begin{pmatrix}v_1\\v_2\end{pmatrix}
\]
is nonzero for every $k_3\neq 0$.

On the other hand, the second component of $k_1v_1$ is $0$. 
Hence
\[
2\in Supp\!\left(k_3\begin{pmatrix}1&u_i\end{pmatrix}
\begin{pmatrix}v_1\\v_2\end{pmatrix}\right)
\]
while
\[
2\notin Supp(k_1v_1).
\]

Therefore
\[
Supp\!\left(k_3\begin{pmatrix}1&u_i\end{pmatrix}
\begin{pmatrix}v_1\\v_2\end{pmatrix}\right)
\nsubseteq Supp(k_1v_1),
\]
which proves the claim in this case.

\medskip
\noindent
\textbf{Case 4. Comparison with vectors of type $(1,d_j)$.}

Using Lemma~\ref{lemma1}, there exists a unique $d_l$ such that $d_j+d_l=0$. 
From this we obtain the explicit form of
\[
\begin{pmatrix}1&d_j\end{pmatrix}
\begin{pmatrix}v_1\\v_2\end{pmatrix}.
\]

Two situations must be considered depending on whether $k_4$ is a unit or a zero divisor.

\medskip
\textbf{(i) $k_4$ is a unit.}

Then the second component of
\[
k_4\begin{pmatrix}1&d_j\end{pmatrix}
\begin{pmatrix}v_1\\v_2\end{pmatrix}
\]
is $k_4d_j\neq 0$. 
Hence
\[
2\in Supp\!\left(k_4\begin{pmatrix}1&d_j\end{pmatrix}
\begin{pmatrix}v_1\\v_2\end{pmatrix}\right),
\]
while $2\notin Supp(k_1v_1)$.

Thus the inclusion of supports cannot hold.

\medskip
\textbf{(ii) $k_4$ is a zero divisor.}

In this case, we analyze the possible inclusions of supports by writing
\[
k_1=p^{s_1}u_{r_1},\qquad k_4=p^{s_4}u_{r_4}.
\]

Using the structure of zero divisors in $\mathbb{Z}_{p^n}$ and Remark~\ref{remarka}, one verifies that
\[
Supp\!\left(k_4\begin{pmatrix}1&d_j\end{pmatrix}
\begin{pmatrix}v_1\\v_2\end{pmatrix}\right)
\subseteq
Supp(k_1v_1)
\]
can occur only if $s_1\le s_4$ and $k_4d_j=0$.

Under these conditions we obtain
\[
k_4\begin{pmatrix}1&d_j\end{pmatrix}
\begin{pmatrix}v_1\\v_2\end{pmatrix}=k_4v_1,
\]
which shows that this vector is a scalar multiple of $k_1v_1$. 
Hence minimality is preserved.

\medskip
\noindent
\textbf{Case 5. Comparison with vectors of type $(d_j,1)$.}

Consider
\[
\begin{pmatrix}d_j&1\end{pmatrix}
\begin{pmatrix}v_1\\v_2\end{pmatrix}.
\]

The second component of this vector is $1$, which is a unit. 
Hence for every $k_5\neq 0$,
\[
2\in Supp\!\left(k_5\begin{pmatrix}d_j&1\end{pmatrix}
\begin{pmatrix}v_1\\v_2\end{pmatrix}\right),
\]
whereas $2\notin Supp(k_1v_1)$.

Therefore
\[
Supp\!\left(k_5\begin{pmatrix}d_j&1\end{pmatrix}
\begin{pmatrix}v_1\\v_2\end{pmatrix}\right)
\nsubseteq Supp(k_1v_1),
\]
which proves the claim.

\medskip

Combining all the above cases, we conclude that vectors of the form
\[
k_1\begin{pmatrix}1&0\end{pmatrix}
\begin{pmatrix}v_1\\v_2\end{pmatrix}
\]
are minimal codewords of $M$.

This completes the proof.

\end{proof}

The next statement follows from arguments similar to those used in Lemma~\ref{lemma4}.

\begin{lemma}\label{lemma5}
The vectors of the type
\[
k_2\begin{pmatrix}0&1\end{pmatrix}
\begin{pmatrix}v_1\\v_2\end{pmatrix}
\]
are minimal codewords in $M$.
\end{lemma}

\begin{proof}
The proof follows by arguments analogous to those used in Lemma~\ref{lemma4}. 
In particular, one compares the supports of the vectors of this type with the supports of vectors belonging to the other possible types in $M$. 
The details are therefore omitted.
\end{proof}

\begin{lemma}\label{lemma6}
The vectors of the type
\[
k_3\begin{pmatrix}1&u_i\end{pmatrix}
\begin{pmatrix}v_1\\v_2\end{pmatrix}
\]
are minimal codewords in $M$.
\end{lemma}

\begin{proof}

Recall that every vector of $M$ belongs to one of the five types described previously. 
To prove the lemma, we compare vectors of the form
\[
k_3\begin{pmatrix}1&u_i\end{pmatrix}
\begin{pmatrix}v_1\\v_2\end{pmatrix}
\]
with vectors belonging to each of the other possible types.

\medskip
\noindent
\textbf{Case 1. Comparison with vectors of type $(1,0)$.}

Consider
\[
\begin{pmatrix}1&u_i\end{pmatrix}
\begin{pmatrix}v_1\\v_2\end{pmatrix}.
\]

From Lemma~\ref{lemma1}, there exists a unique $u_t$ such that
\[
1+u_i u_t=0.
\]

Therefore the $(t+2)$-th component of this vector is $0$. 
However, the $(t+2)$-th component of
\[
k_1\begin{pmatrix}1&0\end{pmatrix}
\begin{pmatrix}v_1\\v_2\end{pmatrix}=k_1v_1
\]
is a unit and hence nonzero for every $k_1\neq0$.

Thus
\[
(t+2)\in Supp(k_1v_1)
\quad\text{but}\quad
(t+2)\notin Supp\left(k_3\begin{pmatrix}1&u_i\end{pmatrix}
\begin{pmatrix}v_1\\v_2\end{pmatrix}\right).
\]

Hence
\[
Supp(k_1v_1)\nsubseteq
Supp\left(k_3\begin{pmatrix}1&u_i\end{pmatrix}
\begin{pmatrix}v_1\\v_2\end{pmatrix}\right),
\]
which proves minimality in this case.

\medskip
\noindent
\textbf{Case 2. Comparison with vectors of type $(0,1)$.}

The $(t+2)$-th component of
\[
\begin{pmatrix}1&u_i\end{pmatrix}
\begin{pmatrix}v_1\\v_2\end{pmatrix}
\]
is $0$. 

However, the $(t+2)$-th component of
\[
k_2\begin{pmatrix}0&1\end{pmatrix}
\begin{pmatrix}v_1\\v_2\end{pmatrix}=k_2v_2
\]
is $k_2u_t$, which is nonzero for every $k_2\neq0$.

Thus
\[
Supp(k_2v_2)\nsubseteq
Supp\left(k_3\begin{pmatrix}1&u_i\end{pmatrix}
\begin{pmatrix}v_1\\v_2\end{pmatrix}\right),
\]
which proves the claim.

\medskip
\noindent
\textbf{Case 3. Comparison with vectors of type $(1,u_{i'})$.}

We distinguish two situations.

\smallskip
\textbf{(i) Same value of $i$.}

From Remark~\ref{theoreme}, the vector
\[
\begin{pmatrix}1&u_i\end{pmatrix}
\begin{pmatrix}v_1\\v_2\end{pmatrix}
\]
contains all zero divisors as components. 

Using the result of Maji, Mesnager, Sarkar and Hansda~\cite{MajiMesnagerSarkarHansda22}, this implies that the module generated by this vector is minimal. 
Hence vectors of this type are minimal with respect to other scalar multiples of the same vector.

\smallskip
\textbf{(ii) Distinct units $u_i$ and $u_{i'}$.}

Using Lemma~\ref{lemma1}, there exist unique indices $t$ and $t'$ such that
\[
1+u_i u_t=0,
\qquad
1+u_{i'}u_{t'}=0.
\]

Assume without loss of generality that $t<t'$. 
Then the $(t+2)$-th component of the vector corresponding to $u_i$ is $0$, while the $(t+2)$-th component of the vector corresponding to $u_{i'}$ is nonzero.

Hence the supports cannot be included unless the two vectors differ only by a scalar multiple. 
This establishes the minimality in this case.

\medskip
\noindent
\textbf{Case 4. Comparison with vectors of type $(1,d_j)$.}

From Lemma~\ref{lemma1}, the vector
\[
\begin{pmatrix}1&u_i\end{pmatrix}
\begin{pmatrix}v_1\\v_2\end{pmatrix}
\]
has a zero component at position $(t+2)$.

However, the $(t+2)$-th component of
\[
k_4\begin{pmatrix}1&d_j\end{pmatrix}
\begin{pmatrix}v_1\\v_2\end{pmatrix}
\]
is $k_4(1+d_j u_t)$.

Since $d_j u_t$ is a zero divisor, $1+d_j u_t$ is a unit by Proposition~\ref{theoremx}. 
Therefore this component is nonzero for every $k_4\neq0$.

Thus the inclusion of supports cannot occur, proving minimality.

\medskip
\noindent
\textbf{Case 5. Comparison with vectors of type $(d_j,1)$.}

The $(t+2)$-th component of
\[
k_5\begin{pmatrix}d_j&1\end{pmatrix}
\begin{pmatrix}v_1\\v_2\end{pmatrix}
\]
is $k_5(d_j+u_t)$.

Since $d_j$ is a zero divisor and $u_t$ is a unit, Proposition~\ref{theoremx} implies that $d_j+u_t$ is a unit. 
Therefore this component is nonzero.

But the $(t+2)$-th component of
\[
k_3\begin{pmatrix}1&u_i\end{pmatrix}
\begin{pmatrix}v_1\\v_2\end{pmatrix}
\]
is $0$.

Hence, the supports cannot be included.

\medskip

Combining all cases, we conclude that vectors of the form
\[
k_3\begin{pmatrix}1&u_i\end{pmatrix}
\begin{pmatrix}v_1\\v_2\end{pmatrix}
\]
are minimal codewords in $M$.

\end{proof}

\begin{lemma}\label{lemma7}
Vectors of the form
\[
k_4\begin{pmatrix}1 & d_j\end{pmatrix}
\begin{pmatrix}v_1\\v_2\end{pmatrix}
\]
are minimal codewords in $M$.
\end{lemma}

\begin{proof}

Recall that every vector of $M$ belongs to one of the five types introduced earlier. 
To prove the lemma, we compare vectors of the form
\[
k_4\begin{pmatrix}1 & d_j\end{pmatrix}
\begin{pmatrix}v_1\\v_2\end{pmatrix}
\]
with vectors belonging to each of the other possible types.

\medskip
\noindent
\textbf{Case 1. Comparison with vectors of type $(1,0)$.}

We have
\[
\begin{pmatrix}1&0\end{pmatrix}
\begin{pmatrix}v_1\\v_2\end{pmatrix}=v_1
\]
and
\[
\begin{pmatrix}1&d_j\end{pmatrix}
\begin{pmatrix}v_1\\v_2\end{pmatrix}.
\]

By Lemma~\ref{lemma1}, there exists a unique $d_l$ such that
\[
d_j+d_l=0.
\]

Hence the $(p^n-p^{n-1}+l+2)$-th component of
\[
\begin{pmatrix}1&d_j\end{pmatrix}
\begin{pmatrix}v_1\\v_2\end{pmatrix}
\]
is $0$, whereas the corresponding component of $v_1$ is $d_l$.

If $k_1$ is a unit, then $k_1d_l\neq0$, implying
\[
(p^n-p^{n-1}+l+2)\in Supp(k_1v_1)
\]
but
\[
(p^n-p^{n-1}+l+2)\notin Supp\!\left(
k_4\begin{pmatrix}1&d_j\end{pmatrix}
\begin{pmatrix}v_1\\v_2\end{pmatrix}
\right).
\]

Hence the inclusion of supports cannot occur.

If $k_1$ is a zero divisor, a similar argument shows that the inclusion
\[
Supp(k_1v_1)\subseteq
Supp\!\left(
k_4\begin{pmatrix}1&d_j\end{pmatrix}
\begin{pmatrix}v_1\\v_2\end{pmatrix}
\right)
\]
can only hold when $k_1d_l=0$. In this situation,
\[
k_1v_1=(k_1k_4^{-1})
k_4\begin{pmatrix}1&d_j\end{pmatrix}
\begin{pmatrix}v_1\\v_2\end{pmatrix},
\]
which shows that the vectors differ only by a scalar multiple. 
Thus minimality holds.

\medskip
\noindent
\textbf{Case 2. Comparison with vectors of type $(0,1)$.}

Consider
\[
\begin{pmatrix}0&1\end{pmatrix}
\begin{pmatrix}v_1\\v_2\end{pmatrix}=v_2.
\]

The $(p^n-p^{n-1}+l+2)$-th component of $v_2$ is $1$, which is a unit. 
Hence the corresponding component of $k_2v_2$ is nonzero for every $k_2\neq0$.

However the same component of
\[
k_4\begin{pmatrix}1&d_j\end{pmatrix}
\begin{pmatrix}v_1\\v_2\end{pmatrix}
\]
is $0$.

Thus
\[
Supp(k_2v_2)\nsubseteq
Supp\!\left(
k_4\begin{pmatrix}1&d_j\end{pmatrix}
\begin{pmatrix}v_1\\v_2\end{pmatrix}
\right).
\]

\medskip
\noindent
\textbf{Case 3. Comparison with vectors of type $(1,u_i)$.}

From Lemma~\ref{lemma1}, the vector
\[
\begin{pmatrix}1&d_j\end{pmatrix}
\begin{pmatrix}v_1\\v_2\end{pmatrix}
\]
has a zero component at position $(p^n-p^{n-1}+l+2)$.

On the other hand, the corresponding component of
\[
k_3\begin{pmatrix}1&u_i\end{pmatrix}
\begin{pmatrix}v_1\\v_2\end{pmatrix}
\]
is $k_3(d_l+u_i)$.

Since $d_l$ is a zero divisor and $u_i$ is a unit, Proposition~\ref{theoremx} implies that $d_l+u_i$ is a unit. 
Therefore this component is nonzero.

Hence the inclusion of supports cannot hold.

\medskip
\noindent
\textbf{Case 4. Comparison with vectors of type $(1,d_{j'})$.}

We distinguish two situations.

\smallskip
\textbf{(i) $j=j'$.}

Using the result of Maji, Mesnager, Sarkar and Hansda~\cite{MajiMesnagerSarkarHansda22}, any vector containing generators from all proper non-trivial ideals generates a minimal module.

From Remark~\ref{remarka}, the vector
\[
\begin{pmatrix}1&d_j\end{pmatrix}
\begin{pmatrix}v_1\\v_2\end{pmatrix}
\]
contains all zero divisors as components. 
Hence the module it generates is minimal.

\smallskip
\textbf{(ii) $j\neq j'$.}

Let $d_l$ and $d_{l'}$ satisfy
\[
d_j+d_l=0,
\qquad
d_{j'}+d_{l'}=0.
\]

Without loss of generality assume $l<l'$. 
Then the $(p^n-p^{n-1}+l+2)$-th component of the vector corresponding to $d_j$ is $0$, whereas the corresponding component of the vector corresponding to $d_{j'}$ is nonzero.

Hence the inclusion of supports cannot occur unless the two vectors differ by a scalar multiple. 
This proves minimality.

\medskip
\noindent
\textbf{Case 5. Comparison with vectors of type $(d_{j'},1)$.}

Consider
\[
\begin{pmatrix}d_{j'}&1\end{pmatrix}
\begin{pmatrix}v_1\\v_2\end{pmatrix}.
\]

The $(p^n-p^{n-1}+l+2)$-th component of this vector is
\[
1+d_{j'}d_l,
\]
which is a unit by Proposition~\ref{theoremx}. 
Therefore the corresponding component of
\[
k_5\begin{pmatrix}d_{j'}&1\end{pmatrix}
\begin{pmatrix}v_1\\v_2\end{pmatrix}
\]
is nonzero.

However the same component of
\[
k_4\begin{pmatrix}1&d_j\end{pmatrix}
\begin{pmatrix}v_1\\v_2\end{pmatrix}
\]
is $0$.

Thus the inclusion of supports cannot occur.

\medskip

Combining all cases, we conclude that vectors of the form
\[
k_4\begin{pmatrix}1&d_j\end{pmatrix}
\begin{pmatrix}v_1\\v_2\end{pmatrix}
\]
are minimal codewords of $M$.

\end{proof}

\begin{lemma}\label{lemma8}
Vectors of the form
\[
k_5
\begin{pmatrix}
d_j & 1
\end{pmatrix}
\begin{pmatrix}
v_1\\
v_2
\end{pmatrix}
\]
are minimal codewords in $M$.
\end{lemma}

\begin{proof}
The proof follows the same line of reasoning as that of Lemma~\ref{lemma7}. 
More precisely, by comparing the supports of vectors of the form
\[
k_5
\begin{pmatrix}
d_j & 1
\end{pmatrix}
\begin{pmatrix}
v_1\\
v_2
\end{pmatrix}
\]
with those of vectors belonging to the other possible types introduced earlier, one verifies that the support of such a vector cannot properly contain the support of any other non-proportional codeword of $M$. 

Hence, these vectors are minimal codewords in $M$.
\end{proof}

The above construction, therefore provides an explicit family of 
$2$-dimensional minimal linear codes over $\mathbb{Z}_{p^n}$.

\begin{theorem}\label{theorem4}
The module $M$ generated by $G$ is a minimal linear code over the ring $\mathbb{Z}_{p^n}$.
\end{theorem}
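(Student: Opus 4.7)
The plan is to assemble the final result directly from the preceding classification lemma and the five minimality lemmas, so the proof is essentially a bookkeeping argument with no new computations.

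First I would fix an arbitrary nonzero codeword $w \in M$. Since $G = \begin{pmatrix} v_1 \\ v_2 \end{pmatrix}$ has linearly independent rows, $\{v_1, v_2\}$ is a basis of $M$ as a free $\mathbb{Z}_{p^n}$-module, so there exist unique $c_1, c_2 \in \mathbb{Z}_{p^n}$ such that
\[
w = \begin{pmatrix} c_1 & c_2 \end{pmatrix} \begin{pmatrix} v_1 \\ v_2 \end{pmatrix}.
\]
Next I would apply Lemma \ref{lemmab} to the coordinate pair $(c_1, c_2)$. That lemma guarantees that $(c_1, c_2)$ lies in exactly one of the five types: $k_1(1,0)$, $k_2(0,1)$, $k_3(1,u_i)$, $k_4(1,d_j)$, or $k_5(d_j,1)$, with $k_i \in \mathbb{Z}_{p^n}$, $u_i \in U(\mathbb{Z}_{p^n})$, $d_j \in D(\mathbb{Z}_{p^n})$. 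In particular, $w$ itself is one of the five corresponding types of vectors of $M$ described immediately before Lemma \ref{lemma4}.

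Depending on which type $w$ falls into, I would invoke the matching minimality lemma among Lemma \ref{lemma4}, \ref{lemma5}, \ref{lemma6}, \ref{lemma7}, or \ref{lemma8}. Each of these lemmas asserts precisely that every vector of its associated type is minimal in $M$ (they already carry out the comparison against all five possible types of covering candidates). Therefore $w$ is a minimal codeword. Since $w$ was an arbitrary nonzero element of $M$, every nonzero codeword of $M$ is minimal, which by the definition of a minimal linear code (recalled in the preliminaries) yields that $M$ is a minimal linear code over $\mathbb{Z}_{p^n}$.

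There is essentially no obstacle at this step: the substance of the theorem was absorbed into the five lemmas (where the nontrivial case analysis with units and zero divisors, the role of the distinguished unit $u_t$ with $1 + u_i u_t = 0$, and the additive inverse $d_l$ of $d_j$ inside $D(\mathbb{Z}_{p^n})$ were handled). The only care needed here is to ensure that the classification from Lemma \ref{lemmab} exhausts all coordinate pairs and that the five lemmas together cover minimality against all five comparison types; both are already laid out explicitly, so the theorem follows by a one-paragraph assembly.
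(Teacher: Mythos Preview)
Your proposal is correct and follows essentially the same route as the paper: classify the coordinate pair via Lemma~\ref{lemmab} into one of the five types and then invoke the corresponding minimality lemma among Lemmas~\ref{lemma4}--\ref{lemma8}. The only difference is that the paper adds a remark separating the case $n\geq 2$ (where the five lemmas apply, via the $m\geq n-1$ hypothesis from \cite{six}) from the case $n=1$ (where $\mathbb{Z}_p$ is a field and the result is quoted from \cite{nine}); you may wish to mention this split for completeness.
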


\begin{proof}

We begin by recalling the classification obtained in Lemma~\ref{lemmab}. 
Every ordered pair $(c_1,c_2)\in \mathbb{Z}_{p^n}^2$ belongs to one of the following five types:

\begin{enumerate}
\item $k_1(1,0)$,
\item $k_2(0,1)$,
\item $k_3(1,u)$,
\item $k_4(1,d)$,
\item $k_5(d,1)$,
\end{enumerate}

where $k_i\in \mathbb{Z}_{p^n}$, $u$ is a unit of $\mathbb{Z}_{p^n}$ and $d$ is a zero divisor of $\mathbb{Z}_{p^n}$.

Consequently, every codeword of $M$ can be written in one of the following forms:

\[
k_1
\begin{pmatrix}
1 & 0
\end{pmatrix}
\begin{pmatrix}
v_1\\
v_2
\end{pmatrix},\quad
k_2
\begin{pmatrix}
0 & 1
\end{pmatrix}
\begin{pmatrix}
v_1\\
v_2
\end{pmatrix},\quad
k_3
\begin{pmatrix}
1 & u_i
\end{pmatrix}
\begin{pmatrix}
v_1\\
v_2
\end{pmatrix},
\]

\[
k_4
\begin{pmatrix}
1 & d_j
\end{pmatrix}
\begin{pmatrix}
v_1\\
v_2
\end{pmatrix},\quad
k_5
\begin{pmatrix}
d_j & 1
\end{pmatrix}
\begin{pmatrix}
v_1\\
v_2
\end{pmatrix}.
\]

The minimality of these five types of vectors has been established in Lemma~\ref{lemma4}, Lemma~\ref{lemma5}, Lemma~\ref{lemma6}, Lemma~\ref{lemma7}, and Lemma~\ref{lemma8}, respectively.

Therefore, every nonzero codeword of $M$ is minimal.

Finally, for $n=1$ the ring $\mathbb{Z}_{p^n}$ reduces to the field $\mathbb{Z}_p$, and the same generator matrix produces a minimal linear code as shown in \cite{LuWuCao21}. 

Hence, for all $n\geq1$, the module $M$ is a $2$-dimensional minimal linear code over $\mathbb{Z}_{p^n}$.

This completes the proof.
\end{proof}

\subsection{Illustrative Example of the Construction}\label{examle}

In this subsection, we illustrate Theorem~\ref{theorem4} through a concrete example over the ring $\mathbb{Z}_4$.

Consider the module $M$ defined over $\mathbb{Z}_4$. According to the construction introduced in Section~\ref{Mainresult}, we must have $m \geq p^n + p^{n-1}$. 
Since $p=2$ and $n=2$, we obtain $p^n+p^{n-1}=4+2=6$. 
We choose $m=9$ and consider the generator matrix

\[
G=
\begin{pmatrix}
1 & 0 & 1 & 1 & 2 & 1 & 2 & 0 & 2 \\
0 & 1 & 1 & 3 & 1 & 2 & 0 & 2 & 2
\end{pmatrix}.
\]

Let
\[
v_1=(1,0,1,1,2,1,2,0,2), \qquad
v_2=(0,1,1,3,1,2,0,2,2).
\]

The module generated by $G$ is
\[
M=\langle v_1,v_2\rangle
=\{c_1v_1+c_2v_2 \;:\; c_1,c_2\in\mathbb{Z}_4\}.
\]

Explicitly, the elements of $M$ are

\begin{align*}
M=\{ &(0,0,0,0,0,0,0,0,0),(1,0,1,1,2,1,2,0,2),(0,1,1,3,1,2,0,2,2),\\
     &(2,0,2,2,0,2,0,0,0),(3,0,3,3,2,3,2,0,2),(0,2,2,2,2,0,0,0,0),\\
     &(0,3,3,1,3,2,0,2,2),(1,1,2,0,3,3,2,2,0),(2,2,0,0,2,2,0,0,0),\\
     &(3,3,2,0,1,1,2,2,0),(1,2,3,3,0,1,2,0,2),(1,3,0,2,1,3,2,2,0),\\
     &(2,1,3,1,1,0,0,2,2),(2,3,1,3,3,0,0,2,2),(3,1,0,2,3,1,2,2,0),\\
     &(3,2,1,1,0,3,2,0,2)\}.
\end{align*}

We now compute the supports of several codewords:

\begin{align*}
Supp(0,0,0,0,0,0,0,0,0)&=\varnothing,\\
Supp(1,0,1,1,2,1,2,0,2)&=\{1,3,4,5,6,7,9\},\\
Supp(0,1,1,3,1,2,0,2,2)&=\{2,3,4,5,6,8,9\},\\
Supp(2,0,2,2,0,2,0,0,0)&=\{1,3,4,6\},\\
Supp(3,0,3,3,2,3,2,0,2)&=\{1,3,4,5,6,7,9\}.
\end{align*}

Similarly, the supports of all other vectors of $M$ can be computed.

From these computations, we observe that whenever
\[
Supp(c_1'v_1+c_2'v_2)
\subseteq
Supp(c_1v_1+c_2v_2),
\]
it necessarily follows that
\[
c_1'v_1+c_2'v_2
=
k(c_1v_1+c_2v_2)
\]
for some $k\in\mathbb{Z}_4$.

Therefore, every nonzero codeword of $M$ is minimal, and hence the module $M$ generated by $v_1$ and $v_2$ is a minimal linear code over $\mathbb{Z}_4$. This example confirms the validity of Theorem~\ref{theorem4}.

\section{Equivalent Generator Matrices for Two-Dimensional Minimal Codes}
\label{utilization of the generator matrix}

In this section, we describe how the generator matrix introduced in the previous section can be used to generate additional $2$-dimensional minimal linear codes over $\mathbb{Z}_{p^n}$.

Recall that the generator matrix of the minimal linear code $M$ is given by
\[
G=
\begin{pmatrix}
v_1\\
v_2
\end{pmatrix}_{2\times m}.
\]

From the construction presented in Section~\ref{Mainresult}, the matrix $G$ can be written in the block form
\[
G=
\begin{pmatrix}
I_2 & U & D^* & D & A
\end{pmatrix},
\]
where
\[
I_2=
\begin{pmatrix}
1 & 0\\
0 & 1
\end{pmatrix},
\]

\[
U=
\begin{pmatrix}
1 & 1 & \dots & 1\\
u_1 & u_2 & \dots & u_{p^n-p^{n-1}}
\end{pmatrix},
\]

\[
D^*=
\begin{pmatrix}
d_1 & d_2 & \dots & d_{p^{n-1}-1}\\
1 & 1 & \dots & 1
\end{pmatrix},
\]

\[
D=
\begin{pmatrix}
1 & 1 & \dots & 1\\
d_1 & d_2 & \dots & d_{p^{n-1}-1}
\end{pmatrix},
\]

and
\[
A=
\begin{pmatrix}
a_1 & a_2 & \dots & a_k\\
b_1 & b_2 & \dots & b_k
\end{pmatrix}.
\]

For convenience, we denote the column vectors of these blocks as follows:

\[
\hat e_1=
\begin{pmatrix}
1\\
0
\end{pmatrix},
\qquad
\hat e_2=
\begin{pmatrix}
0\\
1
\end{pmatrix}.
\]

For the block $U$ we define
\[
U_i=
\begin{pmatrix}
1\\
u_i
\end{pmatrix},
\qquad
1\le i\le p^n-p^{n-1}.
\]

For the block $D^*$ we define
\[
D^j=
\begin{pmatrix}
d_j\\
1
\end{pmatrix},
\qquad
1\le j\le p^{n-1}-1.
\]

For the block $D$ we define
\[
D_j=
\begin{pmatrix}
1\\
d_j
\end{pmatrix},
\qquad
1\le j\le p^{n-1}-1.
\]

Finally, for the block $A$ we define
\[
A_i=
\begin{pmatrix}
a_i\\
b_i
\end{pmatrix},
\qquad
1\le i\le k.
\]

Hence, the block matrices can be expressed as

\[
I_2=
\begin{pmatrix}
\hat e_1 & \hat e_2
\end{pmatrix},
\qquad
U=
\begin{pmatrix}
U_1 & U_2 & \dots & U_{p^n-p^{n-1}}
\end{pmatrix},
\]

\[
D^*=
\begin{pmatrix}
D^1 & D^2 & \dots & D^{p^{n-1}-1}
\end{pmatrix},
\qquad
D=
\begin{pmatrix}
D_1 & D_2 & \dots & D_{p^{n-1}-1}
\end{pmatrix},
\]

and

\[
A=
\begin{pmatrix}
A_1 & A_2 & \dots & A_k
\end{pmatrix}.
\]

\medskip

We now construct a new generator matrix by multiplying each of the first $p^n+p^{n-1}$ column vectors by a unit of the ring $\mathbb{Z}_{p^n}$.

Let $u_{l_i}\in U(\mathbb{Z}_{p^n})$ for
\[
1\le i\le p^n+p^{n-1}.
\]

Define the new column vectors

\[
\hat e_1' = u_{l_1}\hat e_1, \qquad
\hat e_2' = u_{l_2}\hat e_2,
\]

\[
U_i' = u_{l_{i+2}} U_i,
\qquad
1\le i\le p^n-p^{n-1},
\]

\[
D^{j'} = u_{l_{p^n-p^{n-1}+2+j}} D^j,
\qquad
1\le j\le p^{n-1}-1,
\]

\[
D_j' = u_{l_{p^n+1+j}} D_j,
\qquad
1\le j\le p^{n-1}-1.
\]

Using these vectors, we define the matrices

\[
I_2'=
\begin{pmatrix}
\hat e_1' & \hat e_2'
\end{pmatrix},
\]

\[
U'=
\begin{pmatrix}
U_1' & U_2' & \dots & U_{p^n-p^{n-1}}'
\end{pmatrix},
\]

\[
D^{*'}=
\begin{pmatrix}
D^{1'} & D^{2'} & \dots & D^{(p^{n-1}-1)'}
\end{pmatrix},
\]

\[
D'=
\begin{pmatrix}
D_1' & D_2' & \dots & D_{p^{n-1}-1}'
\end{pmatrix}.
\]

We then define the matrix

\[
G'=
\begin{pmatrix}
I_2' & U' & D^{*'} & D' & A
\end{pmatrix}.
\]

Since multiplication of a column vector by a unit does not change the support structure of the resulting codewords, the matrix $G'$ also generates a $2$-dimensional minimal linear code over $\mathbb{Z}_{p^n}$.

\begin{theorem}
The module generated by the matrix $G'$ is also a $2$-dimensional minimal linear code.
\end{theorem}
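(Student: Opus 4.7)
The plan is to reduce the minimality of $M' = \langle v_1', v_2'\rangle$ to the minimality of $M$ already established by Theorem \ref{theorem4}. The key structural observation is that $G'$ is obtained from $G$ by scaling each of the first $p^n+p^{n-1}$ columns by a unit of $\mathbb{Z}_{p^n}$ and leaving the block $A$ unchanged; since multiplying an element of $\mathbb{Z}_{p^n}$ by a unit does not alter whether it is zero, the support pattern of every codeword is preserved in passing from $G$ to $G'$.

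First I would verify that $\{v_1', v_2'\}$ is a basis of $M'$, so that each codeword has a unique coordinate representation $c_1 v_1' + c_2 v_2'$. This is immediate: the leading $2\times 2$ block of $G'$ is the diagonal matrix with entries $u_{l_1}, u_{l_2}$, which is invertible in $\mathbb{Z}_{p^n}$ because these entries are units. Hence any relation $c_1 v_1' + c_2 v_2' = 0$ forces $c_1 = c_2 = 0$.

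Next I would prove the central support lemma:
\[
\mathrm{Supp}\bigl((c_1, c_2)\, G'\bigr) \;=\; \mathrm{Supp}\bigl((c_1, c_2)\, G\bigr) \qquad \text{for all } c_1, c_2 \in \mathbb{Z}_{p^n}.
\]
For any position $j$ in the first $p^n+p^{n-1}$ columns, the $j$-th entry of $(c_1, c_2)\, G'$ equals $u_{l_j}\bigl(c_1 G_{1j} + c_2 G_{2j}\bigr)$, and because $u_{l_j}$ is a unit this vanishes if and only if $c_1 G_{1j} + c_2 G_{2j} = 0$. For the remaining $k$ positions, the columns of $G$ and $G'$ coincide outright.

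Finally, assume $\mathrm{Supp}\bigl((c_1', c_2')\, G'\bigr) \subseteq \mathrm{Supp}\bigl((c_1, c_2)\, G'\bigr)$. By the support lemma the same inclusion holds for $G$, so Theorem \ref{theorem4} supplies an $a \in \mathbb{Z}_{p^n}$ with $c_1' v_1 + c_2' v_2 = a(c_1 v_1 + c_2 v_2)$. Basis uniqueness in $M$ then forces $c_1' = a c_1$ and $c_2' = a c_2$, and substituting into $G'$ yields $(c_1', c_2')\, G' = a(c_1, c_2)\, G'$. Thus every codeword of $M'$ is minimal. I expect no real obstacle: the only non-routine ingredient is the support lemma, which is a one-line consequence of units being non-zero-divisors, and the rest is bookkeeping via the basis-uniqueness step.
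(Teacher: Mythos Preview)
Your proposal is correct and takes essentially the same approach as the paper. The paper packages your column-by-column support lemma as right multiplication by an invertible diagonal matrix $P$ of unit inverses (so $G'P=G$) and then cancels $P$ at the end, whereas you argue the support equality directly and finish via basis uniqueness; these are cosmetic variants of the same idea.
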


\begin{proof}

Recall that the matrix $G'$ is obtained from $G$ by multiplying the first $p^n+p^{n-1}$ column vectors by units of the ring $\mathbb{Z}_{p^n}$.

To formalize this transformation, define the $m\times m$ diagonal matrix
\[
P=
\begin{pmatrix}
P_1 &     &     &     &     \\
    & P_2 &     &     &     \\
    &     & P_3 &     &     \\
    &     &     & P_4 &     \\
    &     &     &     & P_5
\end{pmatrix},
\]

where

\[
P_1=
\begin{pmatrix}
u_{l_1}^{-1} & 0 \\
0 & u_{l_2}^{-1}
\end{pmatrix},
\]

\[
P_2=
\begin{pmatrix}
u_{l_3}^{-1} & 0 & \dots & 0 \\
0 & u_{l_4}^{-1} & \dots & 0 \\
\vdots & \vdots & \ddots & \vdots \\
0 & 0 & \dots & u_{l_{p^n-p^{n-1}+2}}^{-1}
\end{pmatrix},
\]

\[
P_3=
\begin{pmatrix}
u_{l_{p^n-p^{n-1}+3}}^{-1} & 0 & \dots & 0 \\
0 & u_{l_{p^n-p^{n-1}+4}}^{-1} & \dots & 0 \\
\vdots & \vdots & \ddots & \vdots \\
0 & 0 & \dots & u_{l_{p^n+1}}^{-1}
\end{pmatrix},
\]

\[
P_4=
\begin{pmatrix}
u_{l_{p^n+2}}^{-1} & 0 & \dots & 0 \\
0 & u_{l_{p^n+3}}^{-1} & \dots & 0 \\
\vdots & \vdots & \ddots & \vdots \\
0 & 0 & \dots & u_{l_{p^n+p^{n-1}}}^{-1}
\end{pmatrix},
\]

and

\[
P_5=I_k .
\]

Since all diagonal entries of $P$ are units in $\mathbb{Z}_{p^n}$, the matrix $P$ is nonsingular. By construction, we have

\[
G'P=G.
\]

Let $M'$ denote the module generated by $G'$. Any codeword of $M'$ can be written as

\[
(c_1,c_2)G', \qquad c_1,c_2\in\mathbb{Z}_{p^n}.
\]

Suppose that

\[
Supp\!\left((c_1,c_2)G'\right)
\subseteq
Supp\!\left((c_1',c_2')G'\right).
\]

Multiplying both vectors on the right by $P$ gives

\[
Supp\!\left((c_1,c_2)G'P\right)
\subseteq
Supp\!\left((c_1',c_2')G'P\right).
\]

Since $P$ is a diagonal matrix whose entries are units, multiplication by $P$ does not change the positions of zero and nonzero components. Therefore

\[
Supp\!\left((c_1,c_2)G\right)
\subseteq
Supp\!\left((c_1',c_2')G\right).
\]

By Theorem~\ref{theorem4}, the matrix $G$ generates the minimal linear code $M$. Hence there exists $k\in\mathbb{Z}_{p^n}$ such that

\[
(c_1',c_2')G
=
k(c_1,c_2)G .
\]

Substituting $G=G'P$ yields

\[
(c_1',c_2')G'P
=
k(c_1,c_2)G'P .
\]

Multiplying both sides on the right by $P^{-1}$ gives

\[
(c_1',c_2')G'
=
k(c_1,c_2)G'.
\]

Thus

\[
Supp\!\left((c_1,c_2)G'\right)
\subseteq
Supp\!\left((c_1',c_2')G'\right)
\]

implies that the two vectors differ by a scalar multiple. Therefore, every nonzero codeword of $M'$ is minimal.

Consequently, the matrix $G'$ generates a $2$-dimensional minimal linear code over $\mathbb{Z}_{p^n}$.

\end{proof}

\begin{theorem}\label{theorem6}
Suppose that from the generator matrix $G$ we remove all column vectors of any one of the following types

\textbf{(i)} $u_{l_1}\begin{pmatrix}1\\0\end{pmatrix}$,

\textbf{(ii)} $u_{l_2}\begin{pmatrix}0\\1\end{pmatrix}$,

\textbf{(iii)} $u_{l_3}\begin{pmatrix}1\\u_i\end{pmatrix}$,

\textbf{(iv)} $u_{l_4}\begin{pmatrix}1\\d_j\end{pmatrix}$,

\textbf{(v)} $u_{l_5}\begin{pmatrix}d_j\\1\end{pmatrix}$,

where $u_{l_i}$ are units of $\mathbb{Z}_{p^n}$.  
Let $G^*$ denote the resulting matrix. Then the module generated by $G^*$ is not a minimal linear code.
\end{theorem}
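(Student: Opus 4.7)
The plan is to exhibit, for each of the five deletion cases, two distinct codewords $u^*, v^* \in M^* := \langle G^*\rangle$ with $\operatorname{Supp}(v^*) \subseteq \operatorname{Supp}(u^*)$ yet $v^* \neq a\,u^*$ for every $a \in \mathbb{Z}_{p^n}$, thereby contradicting the minimality of $u^*$ in $M^*$. The witness pairs can be read off from Lemmas~\ref{lemma4}--\ref{lemma8}: each of those minimality arguments used one specific column of $G$ as the \emph{single} distinguishing coordinate that blocked a support containment, and deleting precisely that column destroys the obstruction.

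For cases (i) and (ii), I would take $u^* = (0,1)\,G^*$ and $v^* = (p^{n-1},0)\,G^*$ (with coordinates swapped for case (ii)). In the original $G$, the $\hat{e_1}$-coordinate was the unique position where $(p^{n-1},0)\,G$ was nonzero while $(0,1)\,G$ was zero; the identity $p^{n-1} d_j = 0$ then handles the $D^*$-block, so after the deletion the containment holds on every surviving core coordinate, and non-proportionality is forced by the surviving $\hat{e_2}$-coordinate (where $(0,1)\,G^*$ has entry $1$ but $(p^{n-1},0)\,G^*$ has entry $0$). For cases (iii)--(v), with deleted column of type $(1,u_{i_0})^{\mathsf T}$, $(1,d_{j_0})^{\mathsf T}$ or $(d_{j_0},1)^{\mathsf T}$, I would choose $u^*$ to be the codeword of the same type whose unique block-zero (Lemma~\ref{lemma1} for units, Lemma~\ref{lemmac} for zero divisors) sits precisely at the deleted column, and $v^*$ to be any other codeword of the same type. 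For case (iii), $u^* = (1,-u_{i_0}^{-1})\,G^*$ and $v^* = (1,u_{i'})\,G^*$ with $u_{i'} \neq -u_{i_0}^{-1}$; the distinguishing zero of $v^*$, at the column $(1,-u_{i'}^{-1})^{\mathsf T}$, remains in $G^*$ because $-u_{i'}^{-1} \neq u_{i_0}$, so containment on the $U$, $D^*$, $D$ blocks is immediate from Theorems~\ref{theoremx}, \ref{theorema} and Lemma~\ref{lemma1}. Non-proportionality is forced by the surviving $\hat{e_1}$- and $\hat{e_2}$-coordinates (the former pins any scalar to $1$, and then $u_{i'} \neq -u_{i_0}^{-1}$ gives the contradiction). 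Cases (iv) and (v) proceed identically, using the involution $d \mapsto -d$ on $D(\mathbb{Z}_{p^n})$ from Lemma~\ref{lemmac} in place of $u \mapsto -u^{-1}$.

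The principal obstacle will be the coordinate-by-coordinate verification of $\operatorname{Supp}(v^*) \subseteq \operatorname{Supp}(u^*)$ along every surviving coordinate, and in particular the handling of the arbitrary $A$-block. On the core blocks $\hat{e_1}, \hat{e_2}, U, D^*, D$, the ring-theoretic calculus of Section~\ref{divisors and units} delivers the verification block by block without further work; for the $A$-block, one tunes the free parameter in $v^*$ (the unit $u_{i'}$ in case (iii), the zero divisor $d_{j'}$ in cases (iv)--(v), and an appropriate scalar in cases (i)--(ii)) to avoid the $(a_i, b_i)$ configurations that would reinstate the obstruction. This bookkeeping step parallels the implicit $A$-block analyses already present in the proofs of Lemmas~\ref{lemma4}--\ref{lemma8}.
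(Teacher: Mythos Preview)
Your overall strategy is right and matches the paper: for each deletion, exhibit $u^*$ whose unique core-block zero lay at the deleted column (so that after deletion $u^*$ has full support on the surviving core columns), and then find a nonproportional $v^*$ whose support sits inside $\operatorname{Supp}(u^*)$. Cases (i)--(iii) are essentially the paper's argument; for (i)--(ii) your $v^* = p^{n-1}v_1$ is exactly the paper's ``there exist $p^r$'' witness made explicit, and for (iii) your $u^* = (1,-u_{i_0}^{-1})G^*$ coincides with the paper's $v_1+u_tv_2$.

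There is, however, a genuine gap in your cases (iv) and (v). A codeword of type (iv), namely $(1,d)G$, has its unique core-block zero in the $D^*$-block, at the column $(-d,1)^{\mathsf T}$ where $(-d)+d=0$; it is \emph{never} zero at any $D$-block column $(1,d')^{\mathsf T}$, since $1+dd'$ is a unit by Lemma~\ref{lemmac}. Consequently there is no ``codeword of the same type whose unique block-zero sits precisely at the deleted column'' when the deleted column is of type (iv), and your proposed $u^*$ does not exist. The same mismatch occurs in case (v). What is actually needed is the \emph{dual} type: for a deleted $D$-block column $(1,d_{j_0})^{\mathsf T}$ take $u^* = (d_l,1)G^*$ with $d_l = -d_{j_0}$ (type (v)), and for a deleted $D^*$-block column $(d_{j_0},1)^{\mathsf T}$ take $u^* = (1,d_l)G^*$ (type (iv)). This is exactly what the paper does, pairing $u^*$ with $v^* = p^r v_2$ (resp.\ $p^r v_1$).

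A secondary remark on your $v^*$ in case (iii): choosing another type-(iii) codeword forces you into the ``tuning'' you allude to, and the tuning really is necessary---an $A$-column of the form $p^s u\,(1,u_{i_0})^{\mathsf T}$ with $s\geq 1$ is \emph{not} of the deleted form (unit)$\cdot(1,u_{i_0})^{\mathsf T}$, and at such a column $u^*$ vanishes while a generic $(1,u_{i'})G^*$ does not. The paper sidesteps this by taking $v^* = p^r v_1$ instead, which kills all zero-divisor $a_i$ outright and reduces the $A$-block check to the observation that any surviving column with $a_i$ a unit and $a_i + u_t b_i = 0$ would itself be of the deleted type.
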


\begin{proof}

We consider five cases corresponding to the five types of column vectors.

\medskip
\noindent
\textbf{Case 1. Removal of vectors of type $\begin{pmatrix}1\\0\end{pmatrix}$.}

In this case the generator matrix becomes
\[
G^*=
\begin{pmatrix}
\hat e_2 & U & D^* & D & A
\end{pmatrix}.
\]

The corresponding generating vectors are

\[
v_1=(0,1,\ldots,1,d_1,\ldots,d_{p^{n-1}-1},1,\ldots,1,a_1,\ldots,a_k),
\]

\[
v_2=(1,u_1,\ldots,u_{p^n-p^{n-1}},1,\ldots,1,d_1,\ldots,d_{p^{n-1}-1},b_1,\ldots,b_k).
\]

It follows that either

\[
Supp(v_1)\subseteq Supp(v_2)
\]

or there exists $p^r$ such that

\[
Supp(p^rv_1)\subseteq Supp(v_2).
\]

However neither inclusion implies that the corresponding vectors are scalar multiples of each other. Hence the module generated by $G^*$ is not minimal.

\medskip
\noindent
\textbf{Case 2. Removal of vectors of type $\begin{pmatrix}0\\1\end{pmatrix}$.}

The generator matrix becomes

\[
G^*=
\begin{pmatrix}
\hat e_1 & U & D^* & D & A
\end{pmatrix}.
\]

The resulting generating vectors satisfy

\[
Supp(v_2)\subseteq Supp(v_1)
\]

or

\[
Supp(p^rv_2)\subseteq Supp(v_1)
\]

for some $r$. As in Case~1, these inclusions do not imply proportionality of the vectors. Hence the code generated by $G^*$ is not minimal.

\medskip
\noindent
\textbf{Case 3. Removal of vectors of type $\begin{pmatrix}1\\u_i\end{pmatrix}$.}

Let the corresponding column be removed from the block $U$.  
By Lemma~\ref{lemma1}, there exists a unique unit $u_t$ such that

\[
1+u_i u_t=0 .
\]

Consider the vector

\[
v_1+u_tv_2.
\]

From Lemma~\ref{lemma1}, all components of this vector are nonzero except the one corresponding to the removed column. Consequently

\[
Supp(v_1)\subseteq Supp(v_1+u_tv_2)
\]

or

\[
Supp(p^rv_1)\subseteq Supp(v_1+u_tv_2).
\]

Again this inclusion does not imply proportionality, so the generated module is not minimal.

\medskip
\noindent
\textbf{Case 4. Removal of vectors of type $\begin{pmatrix}1\\d_j\end{pmatrix}$.}

Let $d_l$ be the unique zero divisor satisfying

\[
d_j+d_l=0
\]

as given by Lemma~\ref{lemma1}.  
Consider the vector

\[
d_lv_1+v_2.
\]

Using the properties of zero divisors stated in Lemma~\ref{lemma1}, all components of this vector are nonzero except the one corresponding to the removed column. Hence

\[
Supp(v_2)\subseteq Supp(d_lv_1+v_2)
\]

or

\[
Supp(p^rv_2)\subseteq Supp(d_lv_1+v_2).
\]

Thus minimality fails.

\medskip
\noindent
\textbf{Case 5. Removal of vectors of type $\begin{pmatrix}d_j\\1\end{pmatrix}$.}

Again let $d_l$ satisfy $d_j+d_l=0$.  
Consider the vector

\[
v_1+d_lv_2.
\]

As in the previous case, Lemma~\ref{lemma1} ensures that the supports satisfy

\[
Supp(v_1)\subseteq Supp(v_1+d_lv_2)
\]

or

\[
Supp(p^rv_1)\subseteq Supp(v_1+d_lv_2).
\]

Since the vectors are not proportional, minimality does not hold.

\medskip

Therefore, in each case, the module generated by $G^*$ fails to satisfy the definition of minimality. Hence, the code generated by $G^*$ is not minimal.

\end{proof}

The statement of Theorem~\ref{theorem6} also remains valid when the generator matrix $G$ is replaced by $G'$.

\begin{remark}\label{remark21}

From Theorem~\ref{theorem6} we deduce an important structural property of the generator matrix. 
If $G$ generates a $2$-dimensional minimal linear code over $\mathbb{Z}_{p^n}$, then $G$ must necessarily contain the $p^n+p^{n-1}$ specific types of column vectors described in the construction.

Consequently, if the length $m$ of the code satisfies
\[
m < p^n + p^{n-1},
\]
then a $2$-dimensional minimal linear code over $\mathbb{Z}_{p^n}$ cannot exist.
\end{remark}

\subsection{Example Illustrating Theorem \ref{theorem6}}\label{example01}

We now verify Theorem~\ref{theorem6} using the example introduced in Section~\ref{examle}.

Consider the generator matrix

\[
G=
\begin{pmatrix}
1 & 0 & 1 & 1 & 2 & 1 & 2 & 0 & 2 \\
0 & 1 & 1 & 3 & 1 & 2 & 0 & 2 & 2
\end{pmatrix}.
\]

The first six columns of $G$ are

\[
\begin{pmatrix}1\\0\end{pmatrix},
\begin{pmatrix}0\\1\end{pmatrix},
\begin{pmatrix}1\\1\end{pmatrix},
\begin{pmatrix}1\\3\end{pmatrix},
\begin{pmatrix}2\\1\end{pmatrix},
\begin{pmatrix}1\\2\end{pmatrix}.
\]

We analyze the five cases described in Theorem~\ref{theorem6}.

\begin{itemize}

\item[Case 1.] 
Suppose that all columns of the form 
$u\begin{pmatrix}1\\0\end{pmatrix}$ 
with $u\in U(\mathbb Z_4)$ are removed from $G$.

The resulting matrix is

\[
G^*=
\begin{pmatrix}
0 & 1 & 1 & 2 & 1 & 2 & 0 & 2 \\
1 & 1 & 3 & 1 & 2 & 0 & 2 & 2
\end{pmatrix}.
\]

The generating vectors are

\[
v_1=(0,1,1,2,1,2,0,2), \qquad
v_2=(1,1,3,1,2,0,2,2).
\]

We observe that

\[
Supp(2v_1)\subseteq Supp(v_2),
\]

but $2v_1$ cannot be written as $kv_2$ for any $k\in\mathbb Z_4$. 
Hence, the module generated by $G^*$ is not minimal.

\end{itemize}

\subsection{On the Nonexistence of the Construction over $\mathbb{Z}_{p^n q^l}$}
\label{Comment on the ring}

In this subsection, we show that the construction developed for the ring $\mathbb Z_{p^n}$ cannot be extended in a straightforward way to rings of the form $\mathbb Z_{p^n q^l}$.

To illustrate this phenomenon, we consider the following example.

\begin{example}

Consider the ring $\mathbb Z_6$. 
The units of $\mathbb Z_6$ are $\{1,5\}$ and the zero divisors are $\{2,3,4\}$.

Following the construction described earlier, we define the generator matrix

\[
G=
\begin{pmatrix}
1 & 0 & 1 & 1 & 1 & 1 & 1 & 2 & 3 & 4 \\
0 & 1 & 1 & 5 & 2 & 3 & 4 & 1 & 1 & 1
\end{pmatrix}.
\]

This matrix generates a $2$-dimensional linear code $M$.

Consider the vector

\[
(2,3)G
=
(2,3,5,5,2,5,2,1,3,5).
\]

Since

\[
(1,0,1,1,1,1,1,2,3,4)\in M,
\]

we obtain

\[
Supp(1,0,1,1,1,1,1,2,3,4)
\subseteq
Supp(2,3,5,5,2,5,2,1,3,5).
\]

However,

\[
(1,0,1,1,1,1,1,2,3,4)
\neq
k(2,3,5,5,2,5,2,1,3,5)
\]

for any $k\in\mathbb Z_6$.

Therefore, $M$ is not a minimal linear code over $\mathbb Z_6$.

\end{example}

\section{Secret Sharing Schemes from Minimal Linear Codes over $\mathbb{Z}_{p^n}$}
\label{Applications}

Let 
\[
G=\begin{pmatrix} \alpha_1 & \alpha_2 & \cdots & \alpha_n \end{pmatrix}
\]
be a $k\times n$ matrix over $\mathbb{Z}_{p^n}$ which serves as the generator matrix of a $k$-dimensional linear code $C$ over $\mathbb{Z}_{p^n}$. 
Each column vector $\alpha_i$ belongs to $\mathbb{Z}_{p^n}^k$, and we assume that $\alpha_1$ is linearly independent over $\mathbb{Z}_{p^n}$.

Following the approach in \cite{one}, we construct a secret sharing scheme (SSS) from the linear code generated by $G$. 
The secret is an element $s\in\mathbb{Z}_{p^n}$ and the scheme involves a dealer and $n-1$ participants
\[
P_2,P_3,\dots,P_n .
\]

\subsection{Construction of the scheme}

The dealer randomly selects a vector
\[
\beta=(b_1,b_2,\dots,b_k)\in\mathbb{Z}_{p^n}^k
\]
such that
\[
s=\beta\alpha_1 .
\]

Since $\alpha_1$ is fixed, there are exactly $p^{k-1}$ vectors $\beta$ satisfying this relation. 
The dealer then computes

\[
\gamma=(a_1,a_2,\dots,a_n)=\beta G .
\]

The components of $\gamma$ constitute the shares of the scheme. 
Each participant $P_i$ receives the share $a_i$ for $i\ge2$. 
The secret is implicitly encoded through the relation $s=\beta\alpha_1$.

\subsection{Reconstruction of the secret}

A subset of participants
\[
\{P_{i_1},P_{i_2},\dots,P_{i_r}\}
\]
can reconstruct the secret if and only if the column vector $\alpha_1$ is a linear combination of
\[
\alpha_{i_1},\alpha_{i_2},\dots,\alpha_{i_r}.
\]

Equivalently, let $C$ be an $[n,k;p^n]$ linear code with generator matrix $G$. 
In the corresponding SSS, suppose that the dual code $C^\perp$ contains a codeword of the form

\[
(1,0,\dots,0,b_{i_1},0,\dots,0,b_{i_r},0,\dots,0),
\]

where at least one $b_{i_j}\neq0$ and
\[
2\le i_1<i_2<\cdots<i_r\le n.
\]

Then the set of shares $a_{i_1},a_{i_2},\dots,a_{i_r}$ determines the secret. 
Indeed, such a codeword implies that

\[
\alpha_1=\sum_{j=1}^{r}x_j\alpha_{i_j},
\]

where $x_j=-b_{i_j}\in\mathbb{Z}_{p^n}$. 
Consequently, the secret can be reconstructed as

\[
s=\sum_{j=1}^{r}x_j a_{i_j}.
\]

\subsection{Example}

Consider

\[
G=
\begin{pmatrix}
3&3&1&0&0&0\\
3&1&0&1&0&0\\
3&2&0&0&1&0\\
2&3&0&0&0&1
\end{pmatrix}
\]

which generates a $[6,4]$ linear code $C$ over $\mathbb{Z}_4$. 
The dual code $C^\perp$ is generated by

\[
\begin{pmatrix}
1&0&1&1&1&2\\
0&1&1&3&2&1
\end{pmatrix},
\]

which is a minimal linear code.

The codewords of $C^\perp$ whose first coordinate equals $1$ are

\[
(1,0,1,1,1,2),\;
(1,1,2,0,3,3),\;
(1,2,3,3,1,0),\;
(1,3,0,2,3,1).
\]

Let the secret be $s=2$. 
Since
\[
\alpha_1=
\begin{pmatrix}
3\\3\\3\\2
\end{pmatrix},
\]
there are $4^3=64$ vectors $\beta$ satisfying $2=\beta\alpha_1$. 
For instance, choose

\[
\beta=
\begin{pmatrix}
2\\1\\1\\1
\end{pmatrix}.
\]

Then

\[
\gamma=\beta^T G=(2,0,2,1,1,1),
\]

and the shares distributed to the participants 
$P_2,P_3,P_4,P_5,P_6$ are

\[
\{0,2,1,1,1\}.
\]

From the minimal codewords of $C^\perp$ with first coordinate $1$, the minimal access sets are

\[
\{P_3,P_4,P_5,P_6\},\;
\{P_2,P_3,P_5,P_6\},\;
\{P_2,P_3,P_4,P_5\},\;
\{P_2,P_4,P_5,P_6\}.
\]

Each of these sets correctly reconstructs the secret using the corresponding coefficients derived from the dual codewords.

\subsection{Discussion}

If $C^\perp$ is a two-dimensional reduced minimal linear code over the field $\mathbb{F}_4$, then each minimal access set in the associated SSS contains three participants. 
However, when $C^\perp$ is defined over the ring $\mathbb{Z}_4$, each minimal access set contains four participants.

Thus, even though $\mathbb{F}_4$ and $\mathbb{Z}_4$ have the same cardinality, the combinatorial structure of access sets changes when moving from fields to rings. 
This illustrates that minimal codes over rings exhibit structural properties that differ significantly from those of finite-field codes and may lead to new classes of secret sharing schemes.

\section{Conclusion and Future Research Directions}\label{conclusion}

In this article, we investigated the construction and structural properties of two-dimensional minimal linear codes over the ring $\mathbb{Z}_{p^n}$. 
We proved that a $2\times m$ matrix generates a two-dimensional minimal linear code if and only if it contains the $p^n+p^{n-1}$ specific types of column vectors described in our construction, with
\[
m \geq p^n+p^{n-1}.
\]
If any of these essential column types is removed, the resulting code is no longer minimal. 
In particular, Lemma~\ref{lemmab}, Remark~\ref{lemma7}, and Remark~\ref{remark21} together establish a necessary lower bound on the length of a two-dimensional minimal linear code over $\mathbb{Z}_{p^n}$.

We also examined the special case $n=1$, corresponding to the field $\mathbb{Z}_p$. 
In this situation, our analysis shows that a generator matrix with exactly $p+1$ distinct column-vector types satisfies the Ashikhmin--Barg criterion for minimality. 
However, adding additional columns to such a matrix violates this condition. 
For instance, the matrix
\[
G=
\begin{pmatrix}
1 & 0 & 1 & 1 & 1 & 1 \\
0 & 1 & 1 & 2 & 2 & 2
\end{pmatrix}
\]
over $\mathbb{Z}_3$ generates a minimal linear code even though it does not satisfy the Ashikhmin--Barg condition. 
Our construction, therefore, confirms that, in the field case $n=1$, the code length must satisfy $m\ge p+1$, which coincides with the lower bound reported in~\cite{LuWuCao21}.

Furthermore, we presented applications of these constructions to secret sharing schemes derived from minimal linear codes over $\mathbb{Z}_{p^n}$. 
The obtained access structures illustrate how minimal codes over rings can exhibit different combinatorial properties than their counterparts over finite fields.

Finally, all theoretical results established in this work were verified through explicit computations performed using the \textsc{Magma} algebra system. 
These experiments confirm the structural constraints governing minimal codes over $\mathbb{Z}_{p^n}$ and highlight several differences from the classical field setting.

Future research directions include the study of higher-dimensional minimal linear codes over finite rings, the characterization of their access structures in secret sharing schemes, and the investigation of potential applications in ring-based cryptographic constructions.

\bibliographystyle{plain}

\end{document}